\begin{document}
\begin{frontmatter}

\title{Nonlinear PDEs risen when solving some optimization problems in finance, and their solutions\tnoteref{t1}}
\tnotetext[t1]{The views represented herein are the author own views and do
not necessarily represent the views of New York University.}

\author[nyu]{A.~Itkin\corref{cor1}}
\ead{aitkin@nyu.edu}
\cortext[cor1]{Corresponding author}

\address[nyu]{Polytechnic School of Engineering, New York University, \\
6 Metro Tech Center, RH 517E, Brooklyn NY 11201, USA}

\begin{abstract}
We consider a specific type of nonlinear partial differential equations (PDE) that appear in mathematical finance as the result of solving some optimization problems. We review some existing in the literature examples of such problems, and discuss the properties of these PDEs. We also demonstrate how to solve them numerically in a general case, and analytically in some particular case. 
\end{abstract}

\begin{keyword}
nonlinear PDE, optimization, finite-difference scheme
\end{keyword}

\end{frontmatter}

\section{Motivation and Introduction}

In this paper we consider a special type of the nonlinear PDE that arises in mathematical finance by applying optimization to some financial problems. To make it clear we start with the example given by  \cite{Lipton2001} who considered an optimal consumption problem, first formulated in \cite{merton:71}. 

\subsection{Lipton example}
In \cite{Lipton2001} Lipton considers a portfolio of domestic and foreign bonds. The relative value of the domestic bond $Y_t^0$ is constant, i.e. $Y_t^0 = Y_0^0 = 1$, while that of the foreign bond $Y_t^1$ is random and follows the Geometric Brownian motion with constant real-world drift $\mu$ and volatility $\sigma$, and $t$ is the time. Suppose $\Upsilon_t = (\Upsilon_t^0, \Upsilon_t^1)$ is a predictable self-financing trading strategy with the corresponding linear wealth $\Pi_t = \Upsilon_t^0 + \Upsilon_t^1 Y_t^1$, and $\omega_t^1 = \Upsilon_t^1 Y_t^1/\Pi_t$ is the fraction of this wealth invested in foreign bonds. Self-finance of the strategy means that given $\Pi_0$ one has 
\begin{equation*}
d\Pi_t = \Upsilon_t^1 dY_t^1 = \Upsilon_t^1(\mu dt + \sigma d W_t) = \omega^1_t \Pi_t (\mu dt + \sigma d W_t),
\end{equation*}
\noindent where $W_t$ is the Brownian Motion. The optimal consumption problem consists in finding the strategy $\omega_t^1$ which provides the expected wealth $\E0(\Pi_T)$ at maturity $T$ such that $\E0(\Pi_T) > \Pi_0$, while minimizing the variance of $\Pi_T$, i.e. $\E0(\Pi^2_t)$, where $\E0$ is the expectation conditional to the initial moment $t=0$. By using the Lagrange multiplier $\lambda$ and the function $J(t,Y^1,\Pi) = \min_{\omega^1} \left[\E0(\Pi_T^2 - \lambda \Pi_T)\right]$, the previous statement is equivalent to the solution of the Hamilton-Jacobi-Bellman (HJB) problem
\begin{align} \label{HJB}
\min_{\omega^1} & \left\{J_t + \dfrac{1}{2} \sigma^2 F(\omega^1) + \mu Y^1 J_{Y^1} + \omega^1 \mu \Pi J_{\Pi}\right\} = 0, \\
F(\omega^1) &= (Y^1)^2 J_{Y^1,Y^1} + 2 \omega^1 Y^1 \Pi J_{Y^1,\Pi} + (\omega^1)^2 \Pi^2 J_{\Pi,\Pi}. \nonumber
\end{align}

As at $t=0$ function $J(0,Y^1,\Pi) = \Pi_0^2 - \lambda \Pi_0$ doesn't depend on $Y^1$, so does the 
solution $J(t,Y^1,\Pi) = J(t,\Pi)$, and thus \eqref{HJB} reduces to 
\begin{equation} \label{HJB1}
\min_{\omega^1} \left[J_\xi - \dfrac{1}{2} (\omega^1)^2 \Pi^2 J_{\Pi,\Pi} - \omega^1 \bar{\mu} \Pi J_{\Pi}\right],
\end{equation}
\noindent where $\xi = \sigma^2(T-t), \bar{\mu} = \mu/\sigma^2$. Assuming $J_{\Pi,\Pi} > 0$ (see \cite{Lipton2001} for further discussion), \eqref{HJB} has an explicit solution $\omega^1 = - \bar{\mu} J_\Pi/(\Pi J_{\Pi,\Pi})$ that can be substituted back to \eqref{HJB} to give rise to the following problem for $J$:
\begin{equation} \label{NL1}
J_\xi = \dfrac{1}{2}\bar{\mu}^2 \dfrac{J^2_\Pi}{J_{\Pi,\Pi} }.
\end{equation}
This is a nonlinear equation of our interest. Based on the initial condition $J(0,\Pi) = \Pi^2 - \lambda\Pi$ it can be solved analytically by representing $J$ as a quadratic form of $\Pi$ with the coefficients being a function of $\xi$. Further details can be found in \cite{Lipton2001}.

\subsection{Carr example}

In \cite{CarrQUSA2014} Carr considers the call option price $C(S,t)$, with $S$ being the underlying spot price, to be the result of a maximization of some function, $F(p; S,t)$, with respect to some variable, $p$, holding the parameters $S$ and $t$ constant. He shows that if the call value $C(S,t)$ is a convex function of $S$, then $F(p;S,t) = p S - C^*(p;t)$, where $C^*(p;t)$ is the Legendre-Fenchel transform of the call value $C(S;t)$, namely: $C^*(p;t) \equiv \sup_{S} [p S - C(S;t)]$. He also shows that if the call value $C(S,t)$ is a convex differentiable function of $S$, the argument $p$ of the function $F$ is also the argument $p$ of the Legendre-Fenchel transform $C^*(p;t)$ of $C(S;t)$, and is also the delta of the call, i.e. $p = \fp{}{S} C(S;t)$. This result motivates his dual approach to option pricing. That means that in the standard (primal) approach to option pricing, the spot price process $S$ is an input to the approach, while the stochastic process followed by the delta of the option is an output of the approach. In contrast, the Carr's dual approach begins by specifying a stochastic process for the delta $p$ of a call, rather than for the spot price $S$ of the underlying asset. 

Further on, Carr considers a forward contract on the same underlying and assumes that the value $X_t$ of the forward contract solves the stochastic differential equation:
\begin{equation} \label{abm}
dX_t = a(X_t ,t) dW_t, \qquad t \in [0,T),
\end{equation}
\noindent where $a(X_t ,t)$ is a local volatility function. In other words, $X_t$ follows an arithmetic Brownian motion. Then the call value function $c(x,t) \equiv \EQ [X^+_T|X_t = x]$ solves the following parabolic PDE:
\be \label{Cpde}
\frac{1}{2} a^2\left( x, t \right)  c_{11}(x,t) +  c_2(x,t) = 0, \qquad x \in \mathbb{R}, t \in [0,T).
\ee
subject to the terminal condition:
\be \label{Ctc}
c(x,T) = x^+, \qquad x \in \mathbb{R}.
\ee
Here $\EQ$ is the expectation under the risk-neutral measure, sub-index $_1$ in $c_1$ in Carr's notation means the first derivative on the first argument of $c$, and subindex $_2$ denotes same with respect to the second argument. Accordingly, $c_{11}$ is the second derivative on the first argument.
 
Alternatively to the primal approach, in the dual approach Carr introduces $x$ to be now a function of $(p,t)$, so the option price in these variables $c^*(p,t) $ is
\be
c^*(p,t) \equiv p \cdot x(p,t) - c(x(p,t),t), \qquad p \in (0,1), t \in [0,T)
\label{lt}
\ee
Accordingly, under some mild assumptions the first derivative of $c^*(p,t)$ on the first argument is
\be \label{ltd}
c^*_1(p,t) =  x(p,t) + p x_1(p,t) - c_1(x(p,t),t) x_1(p,t) = x(p,t), 
\ee
\noindent since $p = c_1(x(p,t),t)$, and 
\be \label{ltd1a}
c^*_{11}(p,t) =  x_1(p,t) = \frac{1}{c_{11}(x(p,t),t)}, \qquad p \in [0,1], t \in [0,T),
\ee
\noindent by the inverse function theorem. Since $c_{11}(x,t) \geq 0$ implies $c^*_{11}(p,t)\geq 0$, $c^*$ is convex in $p$. Also based on the properties of the Legendre transform $x = c^*_1(p,t)$, and thus
\be \label{ilt}
c(x,t) = x c^*(p(x,t),t) - c^*(p(x,t),t), \qquad x \in \mathbb{R}, t \in [0,T).
\ee

Using It\^{o}'s lemma, Carr also derives an important relationship between $v(p,t)$ - the function relating the volatility of the delta process $P_t \equiv c_1(X_t,t), t\in [0,T)$ to the level of the call delta $p \in (0,1)$  and to time $t \in [0,T)$:
\be \label{ddg}
v^2(p,t) c^*_{11}(p,t)  = c_{11}(x(p,t),t) a^2(x(p,t),t), \qquad p \in (0,1), t \in [0,T).
\ee
\noindent and finally a dual PDE:
\be \label{Cpded}
\frac{1}{2} v^2\left(p, t \right)  c^*_{11}(p,t) -  c^*_2(p,t) = 0, \qquad p \in (0,1), t \in [0,T),
\ee
\noindent subject to the terminal condition
\be \label{tcc2}
\lim\limits_{t \uparrow T} c^*(p,t) =  x 1(x >0) - x^+ = 0.
\ee
While the standard PDE in \eqref {Cpde} is commonly referred to as a backward equation, the dual PDE in  \eqref {Cpded} is a forward equation.

Despite this was not discovered in \cite{CarrQUSA2014}, we will show how to translate \eqref{Cpded} to the nonlinear PDE.

\subsection{Further notes}
The above consideration implicitly assumes that the local volatility function $v(p,t)$ of the delta process is known somehow for all necessary values of $p$ and $t$. And then the forward PDE \eqref{Cpded} with the terminal condition \eqref{tcc2} solves the problem. From a practical point of view this assumption, however, could be unrealistic in a sense that the market doesn't provide $v(p,t)$ explicitly. At the first glance one of the possible approaches to overcome this could be first to solve the backward problem, and then use the condition \eqref{ddg} to restore $v(p,t)$ from the precomputed values of the option gammas $c_{11}(x,t)$ given a map $p \to p(x,t)$. However, this is not fully correct. Indeed, suppose that we solved the backward problem, e.g., using a finite difference method, and now have $c_{11}(x,t)$ at every point on the two-dimensional grid in $(x,t)$. Hence, at $t=0$ we have the values of $p = p(x,0)$ on this grid. Assume that we now want to solve the forward problem, and use these values of $p$ as our $p$ grid. Further on, let us make one step in time forward with a time step $\Delta t$. Obviously, $p(x,\Delta t) \ne p(x,0)$.  Therefore, if we still use the $p$-grid constructed by using the values $p(x,0)$ with the $x$-grid fixed at $t=0$, the inverse map $x(\Delta t) = x(p, \Delta t)$ doesn't coincide with $x(0)$. As a consequence, the values of $a(x.t)$ and $c_{11}(x,t)$ defined at the $x$ grid has to be re-interpolated at time $\Delta t$ onto the  $x(\Delta t)$ grid.

Even worse, for doing that we need to know $x(\Delta t)$ which correspond to the $p$ grid defined at $t=0$. So either we need also to store in memory all values of  $c_{1}(x,t)$ during the backward recursion, or use the connection $x(p,t) = c^*_{1}(p,t)$. In the later case the forward linear equation \eqref{Cpded} becomes nonlinear, because based on \eqref{ddg} we need to use the map $x(p,t)$ where $x$, in turn, is a function of $c^*$. Below when presenting the numerical results we will discuss the peculiarities of the map $x(p,t)$ in more detail.

Also it is worth mentioning that this approach, in general, has some pitfalls. First, once the backward problem is solved, most likely we don't need the solution of the dual problem. A possible exclusion is related to valuation of the long-dated options. Second, storing all values of gammas (and perhaps deltas) on the grid in $x$ and $t$ could be memory consuming. Therefore, we present an alternative view to this problem.

In doing that below we relax the above assumption that the local volatility function $v(p,t)$ of $P_t$ is known. Instead we assume that the only local volatility $a(x,t)$ of the forward process $X_t$ is known. We express it in the new variables $p,t$, and to eliminate any confusion introduce new notation for thus expressed function $a$, namely $\hat{a}(p,t)$. Note, that in order to build a map $(x,t) \to (p(x,t),t)$ we still need to know the option values $c(x,t)$ and deltas $c_1(x,t)$ at $t=0$ which depend on the option expiration $T$.

As per this assumption $v(p,t)$ is now not an independent input of the model. Rather, it is determined by \eqref{ddg} where $c_{11}(x(p,t),t)$, in turn, connects to $c^*_{11}(p,t)$ by \eqref{ltd1a}. Collecting all these expressions together and rearranging terms we arrive at the following equation.
\begin{equation} \label{nonl}
c^*_2(p,t) c^*_{11}(p,t) = \frac{1}{2} \hat{a}^2\left(c^*_1(p,t), t \right), \qquad p \in (0,1), t \in [0,T).
\end{equation}

This is a nonlinear equation with the terminal condition given by \eqref{tcc2}. To the best of our knowledge this type of equations has not been considered in the theory of the nonlinear PDEs as well as in mathematical finance with the exception of the Lipton case considered earlier in this section.

Despite a lack of the initial condition, this is still a forward equation in a sense that the calendar time $t$ runs from 0 to $T$. However, because of the terminal condition, this problem is not the standard Cauchy boundary problem. In physics this type of problems is well-known and usually is solved using either invariant imbedding, \cite{BellmanWing1992}, or the shooting method, \cite{sm}.

Therefore, the main contributions of this paper are i) recognition that the Carr's dual approach gives rise to a certain type of non-linear PDEs, that in the simplest form of the local volatility function linear in the underlying process have a predecessors, see \cite{Lipton2001}, but in general are more complex; and ii) an algorithm of how to efficiently solve these equations using a finite-difference method.

The rest of the paper is organized as follows. In the next section we make analysis of the derived nonlinear PDE and also provide an example (a Black-Scholes setting) when it can be solved analytically. In section~\ref{NM1} we propose a new numerical method to solve the full nonlinear equation which is unconditionally stable and of the second order of approximation in space and time. In section~\ref{NE1}
we present some numerical experiments and discuss their results. The final section concludes and provides suggestions for the future research.

\section{Analysis of the new nonlinear PDE \label{nonlPDE}}

A direct analysis shows that despite an unusual form \eqref{nonl} is well behaved. Indeed, the Legendre transform's Gamma $c^*_{11}(p,t)$ is nonnegative, therefore the Legendre transform's theta should also be nonnegative. As $c^*(p,t) < 0$ this means that when $t$ runs forward the values of $c^*(p,t)$ decrease in the absolute value and in the limit $t \to T$ reach their correct terminal value $c^*(p,T) = 0$. On the other hand, as $c_{11}(x,t)$ has a bell-shaped profile, $c^*_{11}(p,t)$ as per \eqref{ltd1a} has an inverse bell-shaped profile. In other words, for deep ITM and OTM options $c^*_{11}(p,t)$ tends to infinity, while close to ATM it is positive and close to zero. That means that $c^*_2(p,t)$ is zero at $p=0$ and $p=1$ and is positive at the intermediate values of $p$. So the evolution of $c^*(p,t)$ should be that as presented in Fig.~\ref {FigForw}.

It turns out that \eqref{nonl} also admits a dual representation. To show this, we start with \eqref{Cpded},
take into account \eqref{ltd1a} and also the equation
\be \label{ltd2}
c^*_2(p,t) =  p x_2(p,t) - c_1(x(p,t),t) x_2(p,t) - c_2(x(p,t),t) = - c_2(x(p,t),t),
\ee
\noindent derived in \cite{CarrQUSA2014} (since $p = c_1(x(p,t),t)$). Also use the definition of Delta to be $p = c_1(x,t)$ and make a change of time from forward $t$ to backward $\tau = T-t$. Thus transformed \eqref{nonl} now reads
\begin{equation} \label{nonlDual}
c_2(x,\tau) c_{11}(x,\tau) = \frac{1}{2} v^2\left(c_1(x,t), \tau \right), \qquad x \in [0,\infty), \tau \in [0,T).
\end{equation}

\subsubsection{Example.} Consider a particular case when the local volatility function is $a(x,t) = \sigma x$. In other words this corresponds to the displaced lognormal model
\begin{equation} \label{displ}
dS_t = \sigma(S_t-K) d W_t
\end{equation}
\noindent where $K$ is the option strike. Substituting this definition of $a(x,t)$ into \eqref{nonl} we arrive at the following nonlinear equation
\begin{equation} \label{nonlLipton}
c^*_2(p,t) = \frac{1}{2} \sigma^2 \dfrac{(c^*_1)^2}{c^*_{11}(p,t)}.
\end{equation}
One can easily see that this is exactly \eqref{NL1} despite with subject to different boundary and initial conditions. We remind that in \cite{Lipton2001} \eqref{nonlLipton} was solved by assuming that (in this section notation) $c^*(p,t)$ is a quadratic function of $p$. However, in our case this ansatz doesn't obey the boundary and terminal conditions for $c^*(p,t)$. Nevertheless, \eqref{nonlLipton} admits a semi-explicit closed form solution.

To obtain this solution, we re-write the definition of $c^*(p,t)$ in \eqref{lt} in the form $c^*(p(x,t),t) = x p(x,t) - c(x,t)$. As we assumed that $a(x,t) = \sigma x$, $X_t$ obeys the following stochastic SDE (Geometric Brownian motion)
\[ dX_t = \sigma X_t dW_t, \qquad -K < X_t < \infty. \]
By the Feynman-Kac theorem $c(x,t)$ solves 
\begin{equation} \label{FK}
c_t(x,t) = \dfrac{1}{2}\sigma^2 x^2 \sop{c(x,t)}{x}, 
\end{equation}
\noindent subject to the following conditions
\[ c(x,T) = x^+, \quad c(-K,t) = 0, \quad \quad c(x,t) \to x \ \mbox{at} \ x \to \infty, \quad -K \le x < \infty. \]

The explicit solution of \eqref{FK} can be obtained as a sum of two branches. 

At $x > 0$ the solution is 
$c(x,t) = x$. Indeed, this expression solves \eqref{FK} as well as obeys the terminal condition and the boundary condition at infinity.

At $x \le 0$ \eqref{FK} gives the price of the up-and-out call with the upper barrier at $K$ and the strike $K=0$ written on the negative process $X_t$ (so $-X_t \ge 0$). Therefore, it reads, see \cite{Howison1995}
\begin{align} \label{how}
c(x,t) &= c_{BS}(y,t,0) - c_{BS}(y,t,K) - K c_{d}(y,t,K) \\
&- \left(\dfrac{y}{K}\right)^{2 \alpha} \left[
c_{BS}(z,t,0) - c_{BS}(z,t,K) - K c_{d}(z,t,K) \right]. \nonumber
\end{align}
Here $y = -x,\ z = -K^2/x$, $c_{BS}(S,t,K)$ is the Black-Scholes call option price, $c_d(S,T,K) = e^{r(T-t)}N(d_1(S,T,K))$, $\alpha = \frac{1}{2}(1-k'), \ k' = 2(r-q)/\sigma^2$. In our case $r=q=0, \ \alpha = 1/2, \ c_{BS}(y,t,0) = c_{BS}(z,t,0) = x$, and, therefore, \eqref{how} simplifies to 
\begin{align} \label{how1}
c(x,t) &= y - c_{BS}(y,t,K) - K N(d_1(y,t,K)) \\
&- \dfrac{y}{K}\left[ c_{BS}(z,t,0) - c_{BS}(z,t,K) - K N(d_1(z,t,K))\right]. \nonumber
\end{align}
As at $x = -K$ we have $y = z = K$, this solution provides the correct boundary value of $c(-K,t) = 0$. Also at $t=T$ we have $c(x,T) = 0$.

Thus, these two branches of $c(x,t)$ solve \eqref{nonlLipton}, provide the correct boundary values, and obey the terminal condition. What remains to prove is that this solution is continuous at $x=0$. Indeed, the first branch at $x > 0$ in the limit $x \to 0$ provides $c(0,t) = 0$. The second branch at $x < 0$ in the limit $x \to 0$ also provides $c(x,t) \to 0$. The latter is not obvious because $c_{BS}(z,t,0) = K^2/y$, and it seems that the first term in the square brackets in \eqref{how1} multiplied by $y/K$ doesn't converge to 0 at $y \to 0$. However, one has to have in mind, that \eqref{how1} is obtained by using the method of images. And it could be shown that at $x \to 0$ the second line in \eqref{how1} vanishes regardless of the value in the square brackets, because the multiplier $y/K$ vanishes. 

Thus, we managed to solve \eqref{FK} in closed form. Now by \eqref{lt} 
\begin{equation} \label{cstar}
c^*(x,t) = x c_1(x,t) - c(x,t) = 
\begin{cases}
0, & x \ge 0 \cr
\bar{c}^*(x,t), & x < 0
\end{cases}
\end{equation}
\noindent where 
\begin{align*}
\bar{c}^*(x,t) &= K \left[ N(d_1(y,t,K)) - N(d_2(y,t,K)) + \dfrac{\phi(y,t,K)}{\sigma \sqrt{t}} \right] \\
&- y \left[ N(d_1(z,t,K)) - N(d_2(z,t,K)) - \dfrac{\phi(z,t,K)}{\sigma \sqrt{t}}  - 2\dfrac{K}{y} 1_{y \ne 0}\right], \\
\phi(x) &= \dfrac{e^{-x^2/2}}{\sqrt{2}}.
\end{align*}

To complete the solution we need to take into account that $x = x(p,t)$. To obtain this dependence in closed form, observe that 
\[ x(p,t) = c^*_1(p,t) = c^*_1(x,t) \fp{x}{p} \]

From this expression
\begin{equation} \label{xp}
p = \int \dfrac{c_1^*(x,t)}{x} dx = \dfrac{c^*(x,t)}{x} + \int \dfrac{c^*(x,t)}{x^2} dx.
\end{equation}
Thus 
\[ p = \mbox{Inv}\left[\dfrac{c^*(x,t)}{x} + \int \dfrac{c^*(x,t)}{x^2} dx \right] \]
\noindent where $\mbox{Inv}$ means the inverse function. Having an explicit representation for $c^*(x,t)$ this map $x = x(p,t)$ can be computed numerically.

Aside of this example the obvious question is: how to solve \eqref{nonl} with allowance for all its peculiarities and non-linearity. For doing that below we propose an iterative numerical scheme which assumes that the initial value $c^*(p,0)$ is somehow known. 

\section{Numerical method \label{NM1}}

As our intention is to solve \eqref{nonl} numerically, and, moreover, iteratively, we introduce index $k$ which counts the current iteration number. Index $k$ starts from 1 and runs up to a certain number at which the scheme converges in some norm. Also for simplicity of notation in this section we omit superscript $^*$ in the definition of the Legendre transformed option price $c^*(p,t)$, and use $C$ instead of $c$. Thus, below we denote it as $C(p,t)$.

We start by re-writing \eqref{nonl} in the form
\begin{equation} \label{nonl1}
c^*_2(p,t) = \frac{1}{2} \mathcal{D}(p,t) c^*_{11}(p,t), \qquad p \in (0,1), t \in [0,T).
\end{equation}
\noindent where 
\[ \mathcal{D}(p,t) =  \dfrac{1}{2}\left[\dfrac{\hat{a}(c^*_1(p,t),t}{C_{11}(p,t)}\right]^2, \]
If $\mathcal{D}(p,t)$ is a function of $p$ and $t$ only, \eqref{nonl1} would be a linear heat equation. However, in our case $\mathcal{D}(p,t)$ itself is a function of $c^*(p,t)$, and even worse - a function of its derivative $C_{11}(p,t)$.
Nevertheless, the discrete solution of the \eqref{nonl1} could be represented in the operator form
\begin{equation} \label{Scheme}
C(p,t+\Delta t) = \exp\left\{\dfrac{1}{2} \Delta t \mathcal{D}(p,t) \triangle \right\} C(p,t),
\end{equation}
\noindent where $\triangle$ is a finite difference approximation of the second derivative $\partial_{pp}$ on the grid. For instance, on a uniform grid in $p$ with step $h$ 
\[ \triangle C(p,t) = \dfrac{1}{h^2}\left[C(p+h,t) - 2 C(p,t) + C(p-h,t)\right] \]
which approximates $C_{11}(p,t)$ with $O(h^2)$. By using Taylor series expansion of \eqref{Scheme} it could be validated that this scheme approximates \eqref{nonl} with $O(h^2)$ and is exact in $\Delta
t$.

Note that again, using the Taylor series expansion one can verify that the following scheme 
\begin{equation} \label{Scheme2}
C(p,t+\Delta t) = \exp\left\{\dfrac{1}{2} \Delta t D(p,t+\Delta t) \triangle \right\} C(p,t)
\end{equation}
\noindent also provides same order of approximation of \eqref{nonl} in $h$. Combining them together we finally propose to use a scheme
\begin{equation} \label{SchemeF}
C(p,t+\Delta t) = \exp\left\{\dfrac{1}{4} \Delta t \left[ D(p,t) + D(p,t+\Delta t)\right] \triangle \right\} C(p,t)
\end{equation}

Now we setup an iterative algorithm to solve this discrete nonlinear equation. 
\begin{enumerate}
\item At the first iteration we re-write \eqref{Scheme} in the form
\begin{equation} \label{k1}
C^{(1)}(p,t+\Delta t) = \exp\left\{\dfrac{1}{2} \Delta t D(p,t) \triangle \right\} C(p,t)
\end{equation}
In the expression $C^{(k)}$ the superscript $^{(k)}$ marks the value of $C$ found at the $k$-th iteration of the numerical procedure. In other words, at the first iteration we set $C^{(1)}(p,t+\Delta t) = C^{(1)}(p,t)$ but use this substitution only to compute $D(p,t+\Delta t)$ in the RHS of \eqref{SchemeF}. The \eqref{k1} could be solved by either computing the discrete matrix exponential given the grid in $p$ (which could be done with complexity $O(N^2)$, $N$ be the number of the grid points), or by using any sort of P\'ade approximation of the exponent. For instance, P\'ade approximation $(1,1)$ leads to the well-known Crank-Nicholson scheme which could be solved with the linear complexity in $N$.

\item To proceed we represent \eqref{SchemeF} in the form
\begin{equation} \label{kExp}
C^{(k)}(p,t+\Delta t) = \exp\left\{\dfrac{1}{4} \Delta t \left[ D(p,t) + D^{(k-1)}(p,t+\Delta t) \right] \triangle \right\} C^{(k-1)}(p,t+\Delta t)
\end{equation}
So the next approximation $C^{(2)}(p,t+\Delta t)$ to $C(p,t+\Delta t)$ can be found by again either computing the matrix exponential, or by using some P\'ade approximation. 
Having this machinery in hands we can proceed in the same manner until the entire procedure converges, i.e. the condition
\[ \| C^{(k)}(p,t+\Delta t) - C^{(k-1)}(p,t+\Delta t)\| < \varepsilon \]
is reached after $k$ iterations with $\varepsilon$ being the method tolerance.
\end{enumerate}

To apply this algorithm we construct an appropriate discrete nonuniform grid ${\bf G}(x)$ in the delta space. For the sake of concreteness let \eqref{nonl} be solved at the space domain $[p_0,p_m]: \ p_0 =  0$, $p_m = 1$ using a nonuniform grid with $N+1$ nodes ($p_0, p_1,...,p_N$) and space steps $h_1 = p_1-p_0, ..., h_N = p_N - p_{N-1}$. We define a central difference approximation of the second derivative, see, e.g., \cite{HoutFoulon2010}
\begin{equation} \label{app2d}
C_{11}(p_i,t) = \delta_{i,-1}C_{11}(p_{i-1},t) + \delta_{i,0}C_{11}(p_{i},t) + \delta_{i,1}C_{11}(p_{i+1},t)
\end{equation}
\noindent where
\[
\delta_{i,-1} = \dfrac{2}{h_i (h_i + h_{i+1})}, \qquad
\delta_{i,0} = -\dfrac{2}{h_i h_{i+1}}, \qquad
\delta_{i,1} = \dfrac{2}{h_{i+1} (h_i + h_{i+1})}.
\]

\begin{proposition} \label{Prop1}
The scheme \eqref{kExp} with approximation \eqref{app2d} is a) unconditionally stable, b) preserves negativity of the solution, c) provides second order approximation of \eqref{nonl} on the grid $\mathbb{G}$, and d) converges.
\end{proposition}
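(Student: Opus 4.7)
\textbf{Proof proposal for Proposition \ref{Prop1}.} The plan is to treat the four properties more or less independently, exploiting the fact that at each iteration $k$ the operator inside the exponential in \eqref{kExp} is a fixed (data–determined) linear second–order discrete operator; the nonlinearity enters only through the outer fixed–point iteration in $k$. Throughout I write $A^{(k)} \equiv \frac{1}{4}\Delta t \left[ D(p,t) + D^{(k-1)}(p,t+\Delta t)\right] \triangle$, so that \eqref{kExp} reads $C^{(k)}(\cdot,t+\Delta t)= e^{A^{(k)}} C^{(k-1)}(\cdot,t+\Delta t)$, with $C^{(0)}(\cdot,t+\Delta t):=C(\cdot,t)$.

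For part (a), I would first observe that the tridiagonal matrix $\triangle$ built from the coefficients $\delta_{i,-1},\delta_{i,0},\delta_{i,1}$ in \eqref{app2d} has strictly negative diagonal and strictly positive off-diagonals, with row sums equal to zero in the interior and $\le 0$ on rows touching the Dirichlet boundary; hence $-\triangle$ is a (weakly) diagonally dominant M--matrix. A standard similarity transform $\triangle=S^{-1}\tilde\triangle S$ with $S=\mathrm{diag}(\sqrt{h_i+h_{i+1}})$ symmetrises the operator and shows that the spectrum of $\triangle$ is real and nonpositive. Since $\mathcal{D}(p,t)\ge 0$, the diagonal scaling $\mathrm{diag}(\mathcal{D}(p_i,\cdot))\triangle$ inherits a real nonpositive spectrum, and therefore $\|e^{A^{(k)}}\|\le 1$ in the weighted $\ell^2$ norm induced by $S$, uniformly in $\Delta t$. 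That is unconditional (von Neumann) stability.

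For part (b), the same M--matrix structure makes the argument transparent: because the off--diagonal entries of $A^{(k)}$ are nonnegative and its diagonal is nonpositive, $A^{(k)}$ is essentially nonnegative, so by the Perron--Frobenius characterisation of Metzler matrices, $e^{A^{(k)}}$ has all entries nonnegative. Applying a nonnegative matrix to the vector $C(\cdot,t)\le 0$ gives $C^{(k)}(\cdot,t+\Delta t)\le 0$, which propagates by induction in the time index. For part (c), I would Taylor--expand the exponential as $e^{A^{(k)}}=I+A^{(k)}+\tfrac12 (A^{(k)})^2+\cdots$. The operator $A^{(k)}$ applied to a smooth $C$ yields $\tfrac14\Delta t[D(p,t)+D(p,t+\Delta t)]C_{11}+O(h^2\Delta t)$ by \eqref{app2d}, and the trapezoidal average in $\Delta t$ supplies the second order in time; matching this against the Taylor expansion of the exact solution of \eqref{nonl1} between $t$ and $t+\Delta t$ and using $c^*_2=\tfrac12\mathcal{D}c^*_{11}$ to re-express time derivatives as spatial ones gives a local truncation error $O(\Delta t^3+\Delta t\, h^2)$, hence second order globally in both $h$ and $\Delta t$ on the grid $\mathbb{G}$.

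Part (d) is the real obstacle, since it requires convergence of the inner Picard iteration together with convergence to the PDE solution. I would split it in two. First, for fixed $\Delta t$, I would show that the map $\Phi:C^{(k-1)}\mapsto C^{(k)}$ defined by \eqref{kExp} is a contraction in the norm of part (a) for $\Delta t$ smaller than a constant depending on the Lipschitz constant of $\hat a^2$ and on bounds on $C_{11}$ from below away from the ATM region: write $\Phi(U)-\Phi(V)=[e^{A(U)}-e^{A(V)}]V + e^{A(U)}(U-V)$; using the integral representation $e^{A(U)}-e^{A(V)}=\int_0^1 e^{sA(U)}(A(U)-A(V))e^{(1-s)A(V)}ds$ together with $\|e^{sA}\|\le 1$ from (a), the contraction factor is $O(\Delta t\,L\,\|U-V\|)$, so Banach's fixed point theorem supplies a unique limit $C(\cdot,t+\Delta t)$. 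Second, once the inner iteration has converged, the time--marching scheme satisfies \eqref{SchemeF}, which by part (c) is consistent and by part (a) stable; Lax equivalence (applied in its nonlinear variant, e.g.\ the Strang--Keller formulation) then yields convergence of the discrete solution to the solution of \eqref{nonl} as $h,\Delta t\to 0$. The main subtlety I would need to control carefully is the Lipschitz estimate on $\mathcal{D}$: because $\mathcal{D}$ involves $1/C_{11}^2$ and $C^*_{11}$ degenerates at the ATM point, the contraction argument must be carried out on the set of admissible iterates identified in Section~\ref{nonlPDE} where $c^*_2\ge 0$ and $c^*_{11}\ge 0$, which the negativity--preservation property (b) and the qualitative picture in Fig.~\ref{FigForw} make an invariant set of $\Phi$.
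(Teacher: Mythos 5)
Your proposal follows essentially the same route as the paper: the Metzler/M-matrix structure of the discrete operator yields both nonnegativity of $e^{A^{(k)}}$ and the contractive bound (parts (a) and (b)), a Taylor expansion of the exponential gives the second-order consistency in (c), and a contraction argument for the Picard iteration (you via the integral representation of the exponential difference and Banach's theorem, the paper via the Fr\'echet derivative and a mean-value theorem for operators) delivers (d) under a mild restriction on $\Delta t$. Your treatment is marginally tighter in two places---the similarity transform $S=\mathrm{diag}(\sqrt{h_i+h_{i+1}})$ that symmetrises $\triangle$ makes the spectral-norm bound $\|e^{A^{(k)}}\|\le 1$ genuinely rigorous for the non-normal scaled operator $\mathrm{diag}(\mathcal{D})\triangle$, and you explicitly flag that the Lipschitz bound on $\mathcal{D}\propto 1/C^{*\,2}_{11}$ must be confined to an invariant set where $C^*_{11}$ is bounded away from zero---but both proofs land on the same ``practically unconditional'' contraction condition on $\Delta t$.
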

\begin{proof}
See Appendix.
\end{proof}

\section{Numerical experiments \label{NE1}}
In the first test we solve the problem \eqref{Cpded} assuming that the local volatility function $v(p,t)$ is known. For doing that we first solve the backward problem using the following parameters of the model
$x=100, K=100, T =1$ yrs. Our grid contained $N=400$ nodes in $x$ and 100 steps in time. We used the local volatility function $a(x,t)$ given in Table~\ref {locVol1}
\begin{table}[!ht]
\begin{center}
\begin{tabular}{|c|c|c|c|c|c|c|c|c|c|c|c|c|}
\hline
$t, \mbox{yrs}$ & \multicolumn{9}{|c|}{$x$} \\ \cline{2-10}
& 70	& 80 & 90 &	100 & 110 &	120	& 130 &	140 & 150 \\
\hline
0.1 & 0.447 & 0.455 & 0.459 & 0.462 & 0.465 & 0.467 & 0.468 & 0.470 & 0.471\\
\hline
0.2 & 0.500 & 0.507 & 0.511 & 0.514 & 0.516 & 0.518 & 0.519 & 0.520 & 0.522\\
\hline
0.4 & 0.548 & 0.554 & 0.558 & 0.560 & 0.562 & 0.564 & 0.565 & 0.566 & 0.567\\
\hline
0.6 & 0.592 & 0.597 & 0.601 & 0.603 & 0.605 & 0.607 & 0.608 & 0.609 & 0.610\\
\hline
0.8 & 0.632 & 0.638 & 0.641 & 0.643 & 0.645 & 0.646 & 0.648 & 0.649 & 0.650\\
\hline
\end{tabular}
\caption{Local volatility function $a(x,t)$ of $X_t$.}
\label{locVol1}
\end{center}
\end{table}


The non-uniform grid in $x$ was constructed similar to \cite{ItkinCarrBarrierR3}. We solved the backward equation to obtain the option prices and deltas on the grid at $t=0$, and then used such obtained deltas $c_{1}(x,t)$ as a $p$ grid. This simultaneously provided us with the map $x \to x(p,0)$

Using thus obtained values of $c(p,t)$ we computed the Legendre transform $c^*(p,t)$ at $t=0$ using the definition in \eqref{lt}. We then solved \eqref{Cpded} forward in time. As was mentioned in the previous sections, this equation is nonlinear, because in \eqref{ddg} we computed the map $x(p,t)$ using the relation $x = C^*_1(p,t)$. Therefore, we solved it using an iterative scheme similar to \eqref{SchemeF}. Having that map at every iteration we re-interpolated the values of $C_{11}(x,t)$ found and stored during the backward run to the corresponding grid in $p$.

\begin{figure}[!ht]
\begin{center}
\fbox{\includegraphics[width=5 in]{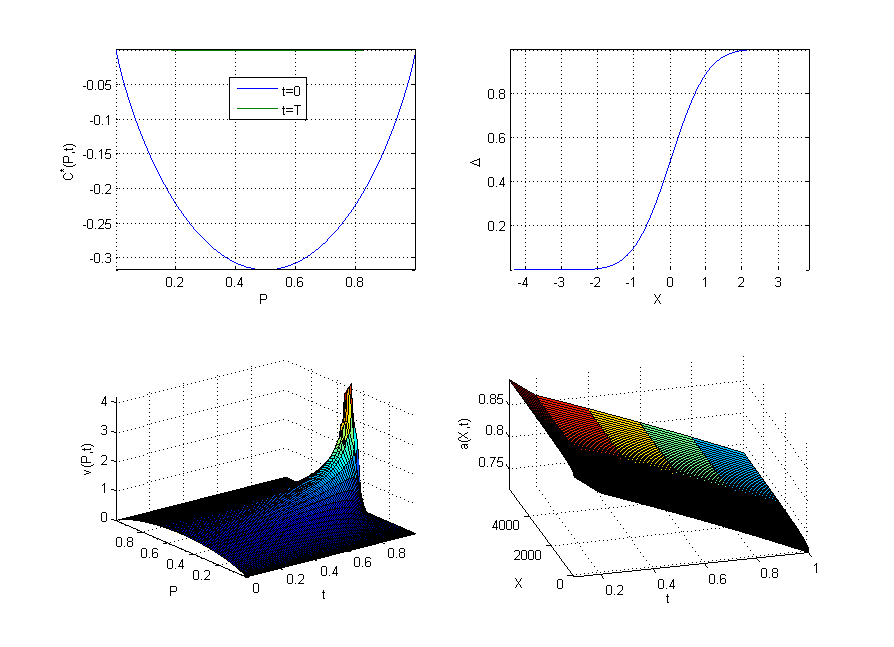}}
\caption{Results of the calculations when the local volatility function $v(p,t)$ is obtained based on the backward recursion.}
\label{Figback}
\end{center}
\end{figure}

In Fig.~\ref {Figback} the computed results are presented in four graphs. The top left graph shows the initial function $c^*(p,0)$ computed using the solution of the backward equation, and the final solution $c^*(p,T)$ computed using \eqref{Cpded}. The top right plot in Fig.~\ref {Figback} demonstrates the map $x \to p(x,t)$ computed using the solution of the backward equation at $t=0$. The bottom right graph represents our local volatility function $a(x,t)$ given in Table.~\eqref {locVol1}. And the bottom left graph shows $v(p,t)$ computed in this experiment.

It is seen that the Legendre transform converges to its terminal value $c^*(p,T) = 0$. Two important points, however, should be taken into account. First, the solution is very sensitive to the accuracy of Gammas. Suppose that the backward equation is solved using a standard, second order accurate in space, scheme. Then the accuracy of Gamma is $O(1)$ which is not sufficient (and could be error-prone) for computing $D(p,t)$. Therefore, in this situation it would be reasonable to use a higher order scheme in $x$ or $p$, for instance a HOC finite difference scheme of the forth order, see, e.g., \cite{chawla2000}.

Second, when $t \to T$ our map $x(p,t)$ degenerates. Indeed, if we fix $x$, at $t=0$ the map 
$x(p,t)$ is regular, i.e. for every $x$ it provides a unique value of $p$. At $t=T$ for the call option for any $x$ the value of $p$ could be either 0 or 1. Therefore, multiple values of $x$ are connected to the same value of $p$. Obviously, the inverse statement is also true. Suppose that at $t=0$ we fix the grid in $p$, and then for every $p$ a unique value of $x(p,0)$ could be found. However, at $t=T$ all these values of $p$ will connect to the value $x=0$, because $x = C^*_1(p,t)$ and $C^*(p,T) = 0$. This degeneracy of the map produces additional problems with computing the accurate values of $v(p,t)$.

Fig.~\ref {map} shows computed plot of $x(p,t)$ obtained in this experiment.
\begin{figure}[!ht]
\begin{center}
\fbox{\includegraphics[width=4 in]{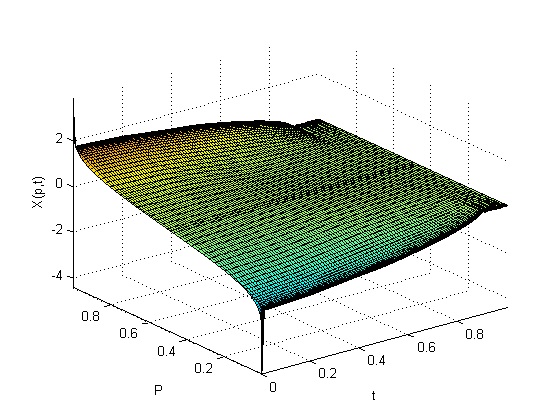}}
\caption{Map $x(p,t)$ computed in our test.}
\label{map}
\end{center}
\end{figure}

\begin{figure}[!ht]
\begin{center}
\fbox{\includegraphics[width=5 in]{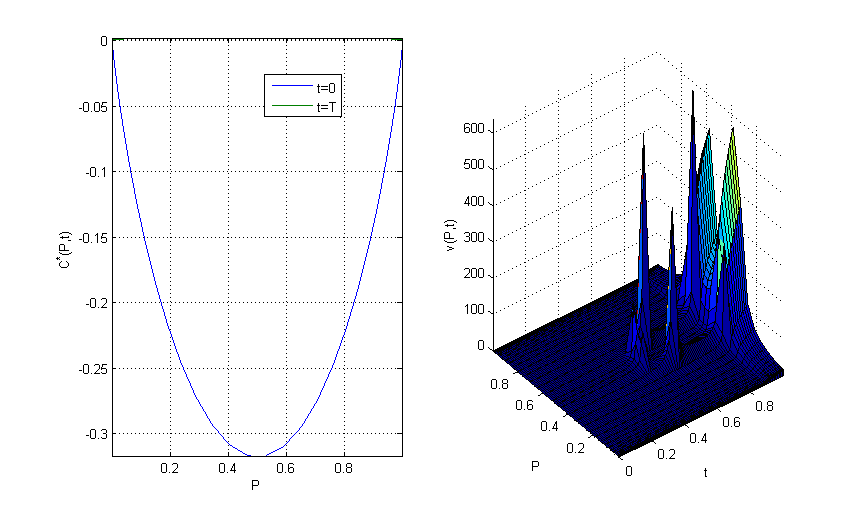}}
\caption{Results of the calculations when the local volatility function $v(p,t)$ is computed at the forward recursion.}
\label{FigForw}
\end{center}
\end{figure}

It is seen that at $t=0$ this plot replicates the top right graph in Fig.~\ref {Figback}, and at $t=T$ it vanishes.

In the second test we solve the problem \eqref{nonl}. Again we first solve the backward problem 
and find the initial values of $c^*(p,0)$ as was described in above. And then using the scheme \eqref{SchemeF} propagate the solution forward in calendar time $t$ up to the maturity $T$.

In Fig.~\ref {FigForw} thus computed results are presented in two graphs similar to the left top and bottom ones in Fig,~\ref {Figback}.

Observe, that the option value at $t=T$ is equal to payoff, hence the option gamma $c_{11}(x,T)$ is the Dirac's delta function. Accordingly, the local delta volatility $v(p,T)$ is also the delta function if the local forward volatility $a(x,t)$ is finite. That brings some technical difficulties to the numerical solution. The computed value of $v(p,t)$ is depicted in the right plot in Fig.~\ref {FigForw}. At every $t$ this function demonstrates the correct behavior by vanishing at $p=0$ and $p=1$, and showing a positive bell-shaped profile at the intermediate values. At $t \to T$ it tends to the delta function. The accuracy of the solution, accordingly, significantly drops down.

Certainly, for our scheme \eqref{SchemeF} the number of iterations to converge depends on the chosen level of tolerance $\varepsilon$. Our experiments show that this number significantly depends on the smoothness of $v(p,t)$ at every time step. As gammas are computed by numerical differentiation, producing a smooth local volatility function similar in the shape to the Dirac's delta function is not simple. As we already mentioned, even if the second order scheme in $p$ is used to find $c^*(p,t)$, gammas are computed with the approximation $O(1)$. Therefore, at every iteration after the solution was computed a polynomial smoothing was applied. In particular, we fitted the solution with the polynomial of the 4th order, and then computed the second derivative of this polynomial. This provides a significant speedup in convergence. 

If gammas were computed by numerical differentiation, at the first time step the number of iterations $k$ was the highest, while when $t$ increases $k$ rapidly decreases. For instance, in the above test at $\varepsilon = 10^{-5}$ the first step in time requires 280 iterations, while starting from the time step 6 it decreases to 5 iterations per one time step. However, still at some values of the model parameters the entire scheme could diverge. Nevertheless, when using the polynomial fit the typical number of iterations was 2-3 at every time step even with $\varepsilon = 10^{-8}$, and the scheme rapidly converged.

We need to underline that despite the dual problem was formulated in \eqref{nonl} subject to the given terminal condition, here we solved this problem assuming that the initial condition at $t=0$ is known from the solution of the backward PDE. We then demonstrated that this solution converges to the correct terminal value of $c^*(p,t)$ within the error of our numerical procedure.  That is definitely because we used the correct initial value. If, however, the initial condition is not known, one can use either the shooting method, or the invariant imbedding method referenced in the above. This, certainly, will significantly increase the computational time.

\section{Conclusions}
This paper deals with a certain type of nonlinear PDE which naturally appears in mathematical finance 
when solving some problems formulated as optimization. As the known examples of these problems we 
considered an optimal consumption problem presented in \cite{merton:71} and then in detail in \cite{Lipton2001} and a dual approach to option pricing invented by \cite{CarrQUSA2014}. We showed that Carr's dual equation can be translated into the nonlinear PDE, and that the nonlinear PDE of \cite{Lipton2001} is a particular case of the former when the volatility function is linear in the underlying variable. Further on, we described some properties of the general nonlinear PDE, provided its analytical solution in the Black-Scholes case, and proposed a new numerical method to solve it which is unconditionally stable and of the second order in both time and space. Stability of the method and approximation are proved by the corresponding Theorem formulated and proved in the paper. We also presented some results of our numerical experiments. All these results are new.

As far as the practical applications of this approach is concerned, as mentioned in \cite{CarrQUSA2014}, for some valuation problems, it may in fact be easier to specify the problem in the dual domain. This could be important, e.g., for doing delta hedge for equity options, or in the FX market where the quotes are expressed in terms of implied volatility as a function of the Black Scholes delta. Also for American options working in the delta space would make it much easier to setup the exercise boundary. Therefore, it does make sense to extend the dual technology, and, perhaps, also including jumps into consideration. From this prospective, these extensions would require new numerical methods for nonlinear PDEs most likely similar to those considered in this paper.

\section*{Acknowledgments}
We are grateful to Peter Carr and Alex Lipton for their comments. They are not responsible for any errors.

\section*{References}

\begin{thebibliography}{12}
\expandafter\ifx\csname natexlab\endcsname\relax\def\natexlab#1{#1}\fi
\expandafter\ifx\csname url\endcsname\relax
  \def\url#1{\texttt{#1}}\fi
\expandafter\ifx\csname urlprefix\endcsname\relax\def\urlprefix{URL }\fi

\bibitem[{Bellman and Wing(1992)}]{BellmanWing1992}
Bellman, R., Wing, G., 1992. An Introduction to Invariant Imbedding. Society
  for Industrial and Applied Mathematics.

\bibitem[{Berman and Plemmons(1994)}]{BermanPlemmons1994}
Berman, A., Plemmons, R., 1994. Nonnegative matrices in mathematical sciences.
  SIAM.

\bibitem[{Carr(2014)}]{CarrQUSA2014}
Carr, P., July 2014. Options as optimizations: A dual approach to derivatives
  pricing. Quant USA. New York.

\bibitem[{Chawla et~al.(2000)Chawla, Al-Zanadi, and Al-Aslab}]{chawla2000}
Chawla, M.~M., Al-Zanadi, M.~A., Al-Aslab, M.~G., 2000. Extended one-step
  time-integration schemes for convection-diffusion equations. Computers and
  Mathematics with Applications 39, 71--84.

\bibitem[{Howison(1995)}]{Howison1995}
Howison, S., 1995. Barrier options. Available at
  \url{https://people.maths.ox.ac.uk/howison/barriers.pdf}.

\bibitem[{{In't Hout} and Foulon(2010)}]{HoutFoulon2010}
{In't Hout}, K.~J., Foulon, S., 2010. {ADI} finite difference schemes for
  option pricing in the {H}eston model with correlation. International journal
  of numerical analysis and modeling 7~(2), 303--320.

\bibitem[{Itkin and Carr(2011)}]{ItkinCarrBarrierR3}
Itkin, A., Carr, P., 2011. Jumps without tears: A new splitting technology for
  barrier options. International Journal of Numerical Analysis and Modeling
  8~(4), 667--704.

\bibitem[{Li et~al.(2010)Li, Lu, Wang, and McCammon}]{LiLuWangMcCammon2010}
Li, B., Lu, B., Wang, Z., McCammon, J., 2010. Solutions to a reduced
  poisson–-nernst–-planck system and determination of reaction rates.
  Physica A 389, 1329--1345.

\bibitem[{Lipton(2001)}]{Lipton2001}
Lipton, A., 2001. Mathematical Methods For Foreign Exchange: A Financial
  Engineer's Approach. World Scientific.

\bibitem[{Merton(1971)}]{merton:71}
Merton, R.~C., December 1971. Optimum consumption and portfolio rules in a
  continuous-time model. Journal of Economic Theory 3~(4), 373--413.

\bibitem[{Roberts and Shipman(1972)}]{sm}
Roberts, S., Shipman, J., 1972. Two-Point Boundary Value Problems: Shooting
  Methods. American Elsevier Pub. Co.

\bibitem[{Starzak(1989)}]{MMCP89}
Starzak, M., 1989. Mathematical methods in chemistry and physcis. Springer, New
  York.

\end{thebibliography}
\newcommand{\noopsort}[1]{} \newcommand{\printfirst}[2]{#1}
  \newcommand{\singleletter}[1]{#1} \newcommand{\switchargs}[2]{#2#1}

\appendix
\section{Proof of Proposition~\protect{\ref {Prop1}}}

Let us remind that the solution of \eqref{nonl} is given by \eqref{Scheme} in the operator form 
\begin{equation} \label{oper}
C^{(k)}(p,t) = e^{\Delta t \mathcal{L}} C^{(k)}(p,t-\Delta t).
\end{equation}
\noindent where
\[ \mathcal{L} = \dfrac{1}{2}[D^{(k-1)}(p,t)+D(p,t-\Delta t)] \partial_{pp}, \qquad
D^{(k)}(p,t) = \dfrac{1}{2}\left[\dfrac{\hat{a}(p,t)}{C^{(k)}_{pp}(p,t)}\right]^2, \]
Consider a discrete analog $A^C_2$ of the operator $\partial_{pp}$ which could be obtained on the grid ${\bf G}(x)$ by using finite-difference approximation \eqref{app2d}. Observe, that $A^C_2$ is the Metzler matrix, see \cite{BermanPlemmons1994}. Indeed, it has all negative elements on the main diagonal, and all nonnegative elements outside of the main diagonal. Also $A_2^C$ is a tridiagonal matrix. 

Now observe that $\Delta t > 0$ and $D^k(p,t) > 0$. Therefore, matrix $M = \dfrac{1}{2}\Delta t  [D^{(k-1)}(p,t)+D(p,t-\Delta t)] A^C_2$ is also the Metzler matrix.

By the properties of the Metzler matrix its exponent is a nonnegative matrix. Therefore, $e^M$ preserves the sign of the vector $C^{(k)}(p,t)$. Also all eigenvalues of the Metzler matrix have a negative real part. Therefore, the spectral norm of the matrix $e^M$ follows
\[ \| e^M \| < 1. \]
Thus, the map $e^M$ is contractual, and hence, \eqref{oper} is unconditional stable.

Now we prove that the matrix $e^M$ is the second order approximation of the operator $e^{\Delta t \mathcal{L}}$. That follows from the fact that matrix $A^C_2$ approximates the operator $\partial_{pp}$ with the second order, i.e. on the non-uniform grid it provides approximation $O(h_i (h_i + h_{i+1})), \ i \in [1,N]$. 

The last point is to prove the convergence of the fixed point Picard iterations. For doing that denote by $\hat{C}(p,t+\Delta t)$ the exact solution of \eqref{nonl1}. According to \eqref{Scheme} it could be represented as 
\begin{equation} \label{exactS}
\hat{C}(p,t+\Delta t) = \mathcal{L} C(p,t), \qquad \mathcal{L} \equiv \exp\left\{\dfrac{1}{2} \Delta t \hat{D}(p,t) \triangle \right\}
\end{equation}
\noindent where 
\[ \hat{D}(p,t) =  \dfrac{1}{2}\left[\dfrac{\hat{a}(p,t)}{\hat{C}_{11}(p,t)}\right]^2, \]
Observe, that at every $k$ $\mathcal{L}^{(k)}$ is a linear bounded operator because it is Lipschitz
\[ \Big\| \mathcal{L}^{(k)} C_1(p,t) - \mathcal{L}^{(k)} C_2(p,t)\Big\| \le \Big\| \mathcal{L}^{(k)} \Big\| \|C_1(p,t) - C_2(p,t)\| \le \|C_1(p,t) - C_2(p,t)\| \]
\noindent since as it was shown earlier in the the spectral norm $\|\mathcal{L}^{(k)}\| \le 1$.

Then  by the mean-value theorem for operators we have
\begin{align*}
\Big\| &C^{(k)}(p,t+\Delta t) - \hat{C}(p,t+\Delta t) \Big\| = \Big\| \mathcal{L}^{(k-1)} C(p,t)  - \hat{\mathcal{L}} C(p,t) \Big\| \\
&\le \| \mathbb{D}(\mathcal{L})(\xi^{(k}))\| \Big\| C^{(k-1)}(p,t+\Delta t) - \hat{C}(p,t) \Big\|
\end{align*}
\noindent where $\xi^{(k)}$ is a convex combination of $C^{(k-1)}(p,t+\Delta t)$ and $C^{(k)}(p,t+\Delta t)$, and $\mathbb{D}$ denotes the Fr\'echet derivative of the operator 
$\mathcal{L}$ at the space of all bounded linear operators, see, e.g., \cite{LiLuWangMcCammon2010}.

To compute the norm of the Fr\'echet derivative recall that by definition 
\[ \| \mathbb{D}(\mathcal{L}) \| = \sup_{u \ne 0} \dfrac{\| \mathbb{D}(\mathcal{L})(u) \|}
{\| u \|}\]
If $u = (u_1,...,u_m) \in [L^\infty(-\infty,0)]^m$, then 
\[ 
\mathbb{D}(\mathcal{L})(u) = \fp{\mathcal{L}(C + t u)}{t}\Big|_{t=0}, \]
\noindent and, therefore
\begin{equation} \label{funcDer}
\| \mathbb{D}(\mathcal{L})(u) \|  = \hat{a}^2 \Delta t  \Big\| \dfrac{\mathcal{L}(C)}{C^3_{pp}} u_{pp} \Delta \Big\| \| u\|^{-1} \le \hat{a}^2 \Delta t \dfrac{1}{\| \Delta\|}.
\end{equation}
It is well-known that, \cite{MMCP89} 
\[ \| \Delta \| = \dfrac{4}{h^2}\sin^2 \dfrac{i \pi}{2(N+1)}, \quad i \in [1,...,N] \]
\noindent where $N$ is the size of the matrix $\Delta$. Therefore, $\Delta t$ and $h$ could be easily chosen such that $\| \mathbb{D}(\mathcal{L})(u) \| < 1$. It is important, that thus chosen values are independent, and, therefore, the scheme remains to be almost" unconditionally stable. 

To explain what almost" means let us compare our condition with the familiar stability condition for the Euler explicit finite-difference scheme, which reads
\begin{equation} \label{stabCond}
2 \nu \Delta t \le h^2 
\end{equation}
\noindent with $\nu$ being some diffusion coefficient. As it can be seen, the latter condition is restrictive, because it sets the upper limit on the time step given the space step $h$. In contrast, our condition reads
\[  \hat{a}^2 \Delta t h^2 \left[4\sin^2 \dfrac{i \pi}{2(N+1)}\right]^{-1} < 1 \]
In the worst case scenario this could be re-written as 
\[ \hat{a}^2 \Delta t h^4 < h^2   \]
So comparing this with \eqref{stabCond} one can see, that the stability is conditional. However it is considerably relaxed as compared with that in \eqref{stabCond} because of the presence of the multiplier $h^4$ in the left hand side. Indeed, we need $\Delta t < 1/(h \hat{a})^2$ where the right hands part is usually a huge value unless the local volatility $\hat{a}(p,t)$ is very very high which is impractical. Therefore, for any practical application theoretical restrictions on $\Delta t$ are not important.
$\blacksquare$

\end{document}